\newtheorem{theo}{Theorem}
\newtheorem{defi}[theo]{Definition}
\newtheorem{prop}[theo]{Proposition}
\newtheorem{lemma}[theo]{Lemma}
\newcommand{\N}{\mathbb{N}}
\newcommand{\F}{\mathbb{F}}
\newcommand{\HH}{\mathbf H}
\newcommand\mc[1]{\mathcal{#1}}
\newcommand\gen[1]{\langle #1\rangle}
\newcommand{\Siran}{\v{S}ir\'a\v{n} }
\DeclareMathOperator{\rank}{Rank}
\DeclareMathOperator{\Ker}{Ker}
\DeclareMathOperator{\syst}{syst}
\DeclareMathOperator{\csyst}{csyst}
\DeclareMathOperator{\Area}{Area}
\begin{document}

\title{Tradeoffs for reliable quantum information storage in surface codes and color codes}

\author{
    Nicolas Delfosse\\
    delfosse@lix.polytechnique.fr\\
    INRIA Saclay \& LIX, \'Ecole Polytechnique (UMR 7161), 91128 Palaiseau, France
}

\maketitle

\begin{abstract}
The family of hyperbolic surface codes is one of the rare families of quantum LDPC codes with non-zero rate and unbounded minimum distance.
First, we introduce a family of hyperbolic color codes. This produces a new family of quantum LDPC codes with non-zero rate and with minimum distance logarithmic in the blocklength.
Second, we show that the parameters $[[n, k, d]]$ of surface codes and color codes satisfy $k d^2 \leq C (\log k)^2 n$, where $C$ is a constant that depends only on the row weight of the parity--check matrix. Our results prove that the best asymptotic minimum distance of LDPC surface codes and color codes with non-zero rate is logarithmic in the length.
\end{abstract}

\section{Introduction}



The generalization of Low Density Parity--Check (LDPC) codes to the quantum setting is far from evident. The fact that most of the constructions of quantum LDPC codes have bounded minimum distance illustrates this difficulty. Such a distance is not sufficient for a family of quantum LDPC codes.
Topological constructions of quantum LDPC codes \cite{Ki03, BK98, BM06, BM07a, BPT10, CDZ11, TZ09} allow us to obtain a growing minimum distance, but their dimension is generally small.
The construction of families of non-zero rate quantum LDPC codes with unbounded minimum distance is highly non trivial.
Only a small number of families satisfy these two conditions. For exemple, hyperbolic surface codes, introduced by Freedmann, Meyer and Luo \cite{FML02} or by Zémor \cite{Ze09}, achieve a non-vanishing rate with a minimum distance logarithmic in the length.

In this paper, we introduce a family of hyperbolic color codes using double covers of hyperbolic tilings. This produces a new family of quantum LDPC codes with non-zero rate and logarithmic minimum distance.
The second part of this paper is devoted to a bound on the parameters of surface codes and color codes. We prove that the parameters $[[n, k, d]]$ of these codes satisfy
$$
k d^2 \leq C (\log k)^2 n,
$$
where $C$ is a constant that depends only on the row weight of the parity--check matrix.
This tradeoff between the parameters of topological codes proves that we cannot do better than a logarithmic distance using these quantum LDPC codes with non-zero rate.
Our proof is based on results of Gromov in riemannian geometry \cite{Gr92}.
This inequality can be seen as a generalization of the Bravyi-Poulin-Terhal bound that is available only for square tilings \cite{BPT10}.

The definition of surface codes and color codes from the CSS (Calderbank, Steane, Shor) construction is recalled in Section~\ref{section:parameters}. Our family of hyperbolic color codes is introduced in Section~\ref{section:hyp_color_codes} and their parameters are computed. The Section~\ref{section:gromov} is devoted to the generalization of Bravyi-Poulin-Terhal tradeoff to every kind of tiling.

\section{Parameters of surface codes and color codes}
\label{section:parameters}

{\bf CSS codes:}
The family of CSS codes \cite{CS96, St96} relies classical and quantum codes. A {\em CSS code} is defined by two matrices $\HH_X \in \mc M_{r_Z, n}(\F_2)$ and $\HH_Z \in \mc M_{r_Z, n}(\F_2)$ such that every row of $\HH_X$ is orthogonal to every row of $\HH_Z$.
Denote by $C_X$ the kernel of the matrix $\HH_X$ and denote by $C_Z$ the kernel of the matrix $\HH_Z$.
The parameters of this CSS code are $[[n, k, d]]$ where $n$ is the length,  $k=n-\rank \HH_X - \rank \HH_Z$ is the number of encoded quantum bits (qubits) and the minimum distance is the minimum weight of a codeword $x$ which is in $C_Z \backslash C_X^\perp$ or in $C_X \backslash C_Z^\perp$.

\vspace{.2cm}
\noindent
{\bf Tiling of surface:}
A combinatorial description of surface is sufficient to define surface codes and color codes. To avoid confusion between the combinatorial structure of the surface, needed to define these codes, and its riemannian structure, that appears in the last section of this paper, we will talk about a tiling of surface or a tiling. A {\em tiling} is defined to be a triple $G=(V,E,F)$, where $(V,E)$ is a graph and $F$ is the set of faces defined by the embedding of $(V,E)$ in a compact connected 2-manifold (surface) without overlapping edges. A face is given as the set of edges on its boundary.
In what follows, we denote by $V=\{v_i\}_{i=1}^{|V|}$ the vertices, we denote by $E=\{e_i\}_{i=1}^{|E|}$ the edges and we denote by $F=\{f_i\}_{i=1}^{|F|}$ the faces of the tiling $G$.

\vspace{.2cm}
\noindent
{\bf Cycle codes:} A {\em cycle} of a graph is a set of edges containing an even number of edges around every vertex of the graph. The set of cycles of a graph is a binary linear code called the {\em cycle code} of the graph.
A cycle can also be regarded as a vector of $\F_2^{|E|}$. The cycle code is the kernel of the incidence matrix of the graph. Recall that the incidence matrix of a graph $(V,E)$ is the binary matrix $I \in \mc M_{|V|,|E|}(\F_2)$ such that $I_{i, j}=1$ iff $v_i$ is contained in the edge $e_j$.

\vspace{.2cm}
\noindent
{\bf Surface codes:}
The face matrix of the tiling is the matrix $J\in \mc M_{|F|, |E|}(\F_2)$ such that $J_{i, j}=1$ iff $f_i$ contains the edge $e_j$.
The {\em surface code} associated with a tiling $G=(V,E,F)$ is the CSS code defined by the matrices $\HH_X$ and $\HH_Z$ such that $\HH_X \in \mc M_{|V|,|E|}(\F_2)$ is the incidence matrix of the tiling and $\HH_Z$ is the face matrix of the tiling.
The orthogonality results from the fact that faces are cycles, so they are in the kernel of the incidence matrix.
The parameters of these quantum codes are well known \cite{BM07a}. The length is $n=|E|$. The dimension is $k=2-\chi$, where $\chi=|S|-|E|+|F|$ is the Euler characteristic of the tiling. When the tiling is orientable, it is $2g$, where $g$ is the genus of the tiling, because we know that $g$ satisfies $\chi=2-2g$.
The minimum distance $d$ is the smallest length of a cycle which is not a sum of faces in the tiling $G$ or in the dual tiling $G^*$. Here, the sum of faces is the symmetric difference of the corresponding edge sets. A sum of faces is also called a boundary.

\vspace{.2cm}
\noindent
{\bf Color codes:}
Color codes are also defined from a tiling of surface but qubits are placed on the vertices instead of the edges.
Let $G=(V,E,F)$ be a tiling.
We assume that $G$ is trivalent, that is every vertex has degree 3, and that faces of the $G$ can be 3-colored such that two faces sharing an edge do not wear the same color.
The {\em color code} associated with this trivalent tiling $G$ with 3-colorable faces is the CSS code defined by the matrices $\HH_X=\HH_Z=\HH \in \mc M_{|F|,|V|}(\F_2)$ such that $\HH_{i, j}=1$ iff the face $f_i$ contains the vertex $v_j$.
The length of the color code associated with $G$ is $n=|V|$, its dimension is $4-2\chi$. When this tiling is orientable, this number is $k=4g$.
Denote by $C$ the code $\Ker \HH$. The minimum distance of the color code is the minimum weight of a vector $x$ in $C \backslash C^\perp$.

Every vector $x \in \F_2^{n}=\F_2^{|V|}$ can be regarded as a subset of the vertex set $V$.
This geometrical point of view leads to a description of the code $C= \Ker \HH$ and its orthogonal $C^\perp$ in term of cycle codes of graphs. Similarly to the case of surface codes, we deduce from this description that if a vector is included in a sufficiently small ball, it is in $C^\perp$ and it does not appear in the computation of the minimum distance. This property will be useful to obtain lower bounds on the minimum distance. It is summarized in the following lemma:

\begin{lemma} \label{lemma:color_d_bound}
Denote by $C=\Ker\HH$ the classical code associated to a color code. Let $x$ be a vector of $C$. Denote by $F(x)$ the set of faces incident to the vertices of $x$. If $F(x)$ induces a planar graph then $x$ is in $C^\perp$.
\end{lemma}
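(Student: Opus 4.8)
The plan is to deduce the lemma from the description of $C$ and $C^\perp$ in terms of cycle codes indicated just above, and ultimately to reduce it to the analogous statement for surface codes: a cycle of a graph that is supported in a disk (a sufficiently small ball) is a sum of the faces it bounds. Accordingly the argument falls into three steps: making the cycle-code description precise, transferring the planarity hypothesis along it, and invoking the surface-code fact.

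For the first step, recall that $G$ is trivalent and that its faces are properly $3$-coloured; hence its edges inherit a proper $3$-colouring, the edge separating two faces receiving the third colour. The three edges at a vertex then carry the three distinct colours, so for each colour $c$ the $c$-coloured edges form a perfect matching of $V$; contracting this matching produces a tiling $\hat G_c$ of the same surface (the ``shrunk'' tiling). I would then check that the contraction sends a vector $x \in C = \Ker\HH$ to a cycle of $\hat G_c$, sends each face of $G$ to a sum of faces of $\hat G_c$, and---performing this for two of the three colours simultaneously---that $x$ belongs to $C^\perp$ precisely when its two images are sums of faces (boundaries) in the corresponding shrunk tilings. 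This plays, for color codes, the role that ``$C_Z$ is the cycle code of the dual graph and $C_Z^\perp$ is its space of boundaries'' plays for surface codes.

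For the second step, observe that the faces of $G$ incident to the support of $x$ are exactly those in $F(x)$; after contraction they span, together with the edges and vertices on their boundaries, a subregion $\hat R_c$ of $\hat G_c$ whose $1$-skeleton is a quotient of the graph induced by $F(x)$, and the image of $x$ is supported in $\hat R_c$. Since that graph is planar and the union of the faces in $F(x)$ is, by trivalence, a sub-surface of the ambient surface all of whose $2$-cells are disks, one argues that $\hat R_c$ is a planar region whose first homology vanishes. The third step is then the surface-code fact: a cycle of $\hat G_c$ supported in such a region is a sum of the faces it bounds, so both images of $x$ are boundaries; pulling this back through the correspondence of the first step gives $x \in C^\perp$.

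I expect the first step to be the main obstacle. One must set up the dictionary between the color code on $G$ and the two surface codes on its shrunk tilings carefully enough that membership of $x$ in $C^\perp$ is genuinely equivalent to both images being boundaries, and in particular control the interface between $F(x)$ and the remainder of the tiling so that ``boundary inside the planar region'' coincides with ``boundary globally''. A further delicate point is that the hypothesis only asserts planarity of the graph induced by $F(x)$, not simple connectedness of that region; one has to verify, using the trivalent $3$-coloured structure, that planarity of the $1$-skeleton already forces the first homology of the relevant region to vanish.
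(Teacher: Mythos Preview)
Your plan matches the paper's approach: the paper does not give a self-contained proof of this lemma but explicitly states that the argument proceeds via the cycle-code description of color codes through the shrunk lattices of \cite{BM06}, with details deferred to \cite{De12}. Your three-step outline---set up the dictionary with the shrunk tilings $\hat G_c$, transport the planarity hypothesis, and invoke the surface-code fact that a cycle supported in a planar region is a boundary---is precisely that strategy, and your identification of the delicate points (the exact characterisation of $C^\perp$ under the correspondence, and the passage from planarity of the induced graph to triviality of the relevant homology) is accurate.
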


The subgraph $F(x)$ can be planar with many connected components.
A cycle code description of color codes and a proof of this lemma based on the shrunk lattices \cite{BM06} is proposed in \cite{De12}.

\section{Color codes beating Bravyi, Poulin and Terhal bound}
\label{section:hyp_color_codes}

\subsection{Hyperbolic cayley graphs}

In this section, we recall the construction of hyperbolic tilings used by Zémor in \cite{Ze09, DZ10} to construct surface codes. These tilings are based on quotients of triangle groups introduced by \Siran \cite{Si00}.
In what follows, $\ell$ and $m$ are two integers such that $1/\ell+1/m<1/2$.
Consider a group $T(\ell, m)$, generated by two elements $a$ and $b$, with the presentation:
$$
T(\ell, m)= \gen{ a, b \ | \ a^2=b^\ell=(ab)^m=1 }.
$$
Such an infinite group can be realized by complex matrices \cite{Si00, Ze09}.
Denote by $\tau(\ell, m)$ the Cayley graph of the group $T(\ell, m)$ and the generating set $S = \{a, b, b^{-1} \}$. It is the graph of vertex set $V=T(\ell, m)$ with an edge between two vertices if they differ in an element of $S$.
This graph is endowed of a set of faces of the form:
\begin{align}
\label{eqn:face_b}
\{x, xb, xb^2, \dots, xb^{\ell-1}, xb^\ell = x\}
\end{align}
and
\begin{align}
\label{eqn:face_ab}
\{x, xa, x(ab), \dots, x(ab)^{m-1}a, x(ab)^m = x\}
\end{align}
for every vertex $x$ of $\tau(\ell, m)$.

\noindent
Remark that the vertex $x$ is also incident to the face:
\begin{align}
\label{eqn:face_ba}
\{x, xb^{-1}, x(b^{-1}a), \dots, x(b^{-1}a)^m = x\}.
\end{align}
This defines an infinite planar tiling $\tau(\ell, m)$.
Figure~\ref{fig:hyperbolic_tiling_zemor} represents one of these tilings around the vertex corresponding to the element 1 in the group.

\begin{figure}[htbp]
\centering
\includegraphics[scale=.8]{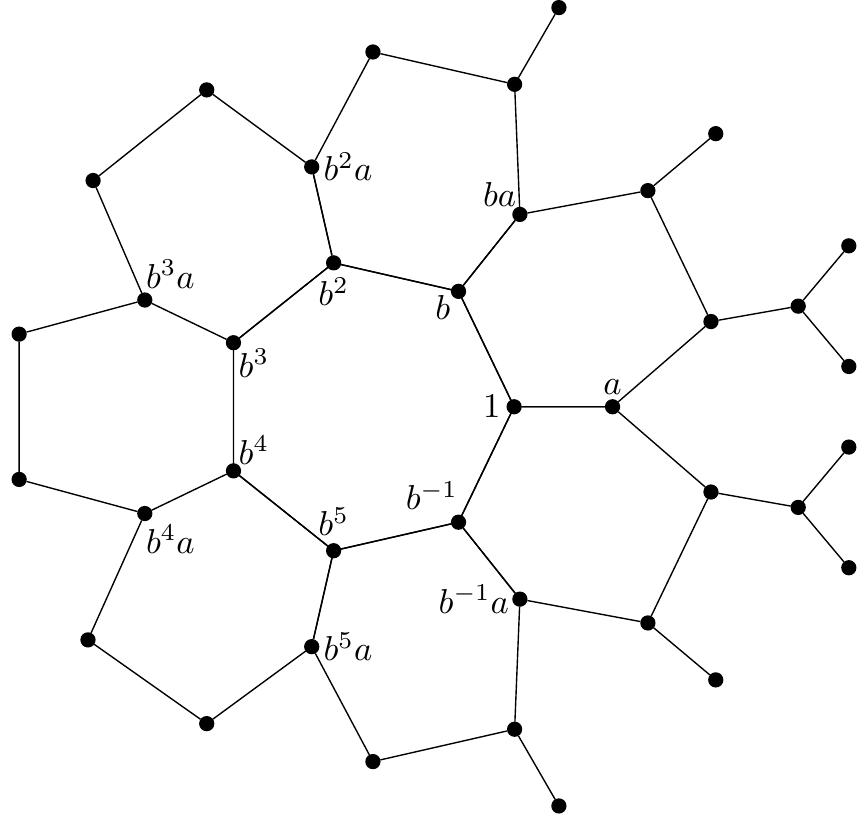}
\caption{Local structure of the tiling $\tau(3, 7)$}
\label{fig:hyperbolic_tiling_zemor}
\end{figure}

To define a color code, we need to construct a finite version of this infinite tiling.
From the infinite square lattice, we can easily identify the opposite edges to obtain a finite version of the square tiling. It becomes more complicated with hyperbolic tilings. In this case, we use the group structure to define a finite tiling. \v{S}ir\'a\v{n}'s strategy is to construct a finite quotient $T_r(\ell, m)$ of the group $T(\ell,m)$ which satisfies exactly the same relations between $a$ and $b$, up to some length $2r+1$. \Siran constructed an infinite family of groups $T_r(\ell, m)$, for $r\in\N$, of cardinality:
\begin{align}
\label{eqn:group_size}
|T_r(\ell, m)| \leq C^r,
\end{align}
for some constant $C=C(\ell, m)$.
From these finite groups, we construct a family of finite tilings $\tau_r(\ell, m)$ with faces defined by the reduction of Equation~(\ref{eqn:face_b}) and Equation~(\ref{eqn:face_ab}).

The surjective morphism from $T(\ell, m)$ to $T_r(\ell, m)$ induces a surjective graph morphism $\pi$ from the infinite graph $\tau(\ell, m)$ to the finite graph $\tau_r(\ell, m)$.
Using this graph morphism, \Siran proved that the graph $\tau_r(\ell, m)$ locally looks like the planar graph $\tau(\ell, m)$. That is:

\begin{prop} \label{prop:tau_r_planaire}
Every ball of radius $r$ of the graph $\tau_r(\ell, m)$ is isomorphic to a ball of same radius in the planar graph $\tau(\ell, m)$.
\end{prop}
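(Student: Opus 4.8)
The plan is to show that the graph morphism $\pi\colon\tau(\ell,m)\to\tau_r(\ell,m)$ induced by the quotient map $\phi\colon T(\ell,m)\to T_r(\ell,m)$ restricts to a graph isomorphism from the ball $B(1,r)$ around the identity of $\tau(\ell,m)$ onto the ball $B(\bar 1,r)$ around the identity of $\tau_r(\ell,m)$. Since $\phi$ is surjective, left multiplication yields a vertex-transitive group of graph automorphisms on each of the two Cayley graphs, and $\pi$ is equivariant for these two actions; hence it suffices to treat balls centered at the identity, an arbitrary ball $B(\bar x,r)$ of $\tau_r(\ell,m)$ being handled by translating by any $x\in T(\ell,m)$ with $\phi(x)=\bar x$. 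The defining property of \v{S}ir\'a\v{n}'s groups will be used in the following form: $\ker\phi$ contains no nontrivial element whose word length with respect to $S=\{a,b,b^{-1}\}$ — equivalently, whose distance to $1$ in $\tau(\ell,m)$ — is at most $2r+1$. This is exactly what ``$T_r(\ell,m)$ satisfies the same relations as $T(\ell,m)$ up to length $2r+1$'' means, once one observes that relators of $T(\ell,m)$ are automatically relators of the quotient $T_r(\ell,m)$.

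First I would check that $\pi$ maps $B(1,r)$ onto $B(\bar 1,r)$. As $\phi$ is a group morphism sending $S$ onto the generating set of $\tau_r(\ell,m)$, the map $\pi$ is distance non-increasing, so $\pi(B(1,r))\subseteq B(\bar 1,r)$; conversely, a geodesic of length $t\le r$ from $\bar 1$ in $\tau_r(\ell,m)$ is a product $\bar s_1\cdots\bar s_t$ with $s_i\in S$, and the element $s_1\cdots s_t$ of $T(\ell,m)$ lies in $B(1,r)$ and maps to the endpoint of that geodesic. Next I would prove injectivity of $\pi$ on $B(1,r)$: if $u,v\in B(1,r)$ satisfy $\phi(u)=\phi(v)$, then $u^{-1}v\in\ker\phi$ has word length at most $2r\le 2r+1$, hence $u^{-1}v=1$. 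The two facts together show that $\pi$ restricts to a bijection $B(1,r)\to B(\bar 1,r)$.

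It remains to see that this bijection preserves and reflects edges, so that it is an isomorphism of the induced subgraphs that constitute the two balls. Preservation is immediate: if $\{u,us\}$ with $s\in S$ is an edge of $\tau(\ell,m)$, then its image is $\{\bar u,\bar u\,\bar s\}$, and $\bar s\neq\bar 1$ because $s\neq 1$ has word length $1\le 2r+1$, so this is an edge of $\tau_r(\ell,m)$; here $s\neq 1$ in $T(\ell,m)$ holds since that group is infinite and $b$ has order $\ell\ge 3$. For reflection, suppose $u,v\in B(1,r)$ and $\{\bar u,\bar v\}$ is an edge of $\tau_r(\ell,m)$, say $\bar v=\bar u\,\bar s$ with $s\in S$. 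Then $u^{-1}v\,s^{-1}$ lies in $\ker\phi$ and has word length at most $2r+1$, hence equals $1$, so $v=us$ and $\{u,v\}$ is an edge of $\tau(\ell,m)$. This reflection step, where the precise bound $2r+1$ (rather than $2r$) is needed, is the crux of the argument; the real work of the proposition lies in extracting from \v{S}ir\'a\v{n}'s construction this exact ``injectivity radius'', while everything else is bookkeeping with the Cayley graph structure and the vertex-transitivity of the two graphs. Combining the above, $\pi|_{B(1,r)}$ is a graph isomorphism onto $B(\bar 1,r)$, and conjugating by the translation automorphisms yields the statement for an arbitrary ball of $\tau_r(\ell,m)$.
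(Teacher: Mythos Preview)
Your argument is correct. Note, however, that the paper does not actually supply its own proof of this proposition: it states the result and attributes it to \v{S}ir\'a\v{n}~\cite{Si00}, using only the property recorded just above the proposition, namely that the finite quotient $T_r(\ell,m)$ ``satisfies exactly the same relations between $a$ and $b$, up to some length $2r+1$''. What you have written is precisely the standard unpacking of that sentence into a full proof: you translate it into the statement that $\ker\phi$ contains no nontrivial element of $S$-length at most $2r+1$, and then verify that the covering map $\pi$ restricts to a bijection $B(1,r)\to B(\bar 1,r)$ that both preserves and reflects edges, with vertex-transitivity reducing the general ball to the one at the identity. Your observation that the edge-reflection step is exactly where the bound $2r+1$ (rather than $2r$) is consumed is the right diagnosis of why \v{S}ir\'a\v{n}'s construction is calibrated as it is. In short, there is nothing to compare against in the paper itself; your proof is a correct and complete justification of the cited result.
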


To study the minimum distance of surface codes, we will use the fact that a cycle included in such a ball is a cycle of a planar graph, thus it is sum of faces.
In the case of color codes, we will combine this proposition with Lemma~\ref{lemma:color_d_bound} to obtain lower bounds on the minimum distance.

\subsection{Construction of hyperbolic color codes}

When the faces of the graph $\tau_r=\tau_r(\ell, m)$ are not 3-colorable, it cannot be used to define a color code. In this case, we will introduce a cover of the graph $\tau_r$ which is equipped with 3-colorable faces.
For example, the faces of the graph represented in Figure~\ref{fig:hyperbolic_tiling_zemor} is not 3-colorable because it contains a face of even length. In what follows, $\ell$ is always an even integer, so that the faces of $\tau_r$ have even length.

The first step is a characterization of the graphs $\tau_r$ that can be 3-colored. We introduce the subgraph $H_r$ of the dual graph $\tau_r^*$. It is the graph whose vertices are the faces of length $2m$ of $\tau_r$. These faces are described by Equation~(\ref{eqn:face_ab}) and by Equation~(\ref{eqn:face_ba}).
Two vertices of $H_r$ are linked iff the corresponding faces of $\tau_r$ share an edge. This graph $H_r$ is regular of degree $m$.
It leads to the following characterization:

\begin{lemma} \label{lemma:CNS_colorable}
The faces of the graph $\tau_r$ are 3-colorable if and only if the graph $H_r$ is bipartite.
\end{lemma}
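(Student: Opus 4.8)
The plan is to reduce the statement to the local combinatorial structure of $\tau_r$ and then run a short coloring argument. First I would record that structure: $\tau_r$ is trivalent, and its faces are of exactly two kinds --- the $\ell$-gons of type~(\ref{eqn:face_b}), which are the cosets of $\gen{b}$ and are bounded only by ``$b$-edges'' $\{x,xb^{\pm1}\}$, and the $2m$-gons of type~(\ref{eqn:face_ab})/(\ref{eqn:face_ba}), whose boundary alternates between ``$a$-edges'' $\{x,xa\}$ and $b$-edges. Moreover each $a$-edge lies on two faces, both $2m$-gons; each $b$-edge lies on exactly one $\ell$-gon and one $2m$-gon; and around every vertex there is exactly one $\ell$-gon and two $2m$-gons, any two of the three sharing an edge. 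In particular two $\ell$-gons never share an edge, and two $2m$-gons share an edge if and only if they share an $a$-edge, which is precisely the adjacency relation defining $H_r$. All of this follows from the relations $a^2=b^\ell=(ab)^m=1$ together with Proposition~\ref{prop:tau_r_planaire}, applied for $r$ large enough that every face (of bounded size $\ell$ or $2m$) lies in a planar ball, so faces are simple cycles meeting each edge in two distinct faces.

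Granting this, the implication ``$H_r$ bipartite $\Rightarrow$ the faces are $3$-colorable'' is immediate. Given a bipartition $H_r=P\sqcup Q$, I would color every $2m$-gon in $P$ with $1$, every $2m$-gon in $Q$ with $2$, and every $\ell$-gon with $3$. Two faces that share an edge are then either a $2m$-gon and an $\ell$-gon --- colors in $\{1,2\}$ versus $3$ --- or two $2m$-gons sharing an $a$-edge, hence adjacent in $H_r$, hence colored $1$ and $2$; two $\ell$-gons never meet. So this coloring is proper.

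For the converse I would start from a proper $3$-coloring $c\colon F\to\{1,2,3\}$. Since $\tau_r$ is trivalent, the three faces around any vertex pairwise share an edge, so they carry all three colors. The key step is to show that every $2m$-gon $f$ sees a single color on all of its $H_r$-neighbours: walking around $\partial f$ and letting $g_1,\dots,g_{2m}$ be the faces meeting the successive boundary edges $e_1,\dots,e_{2m}$ of $f$ (so $g_1,g_3,\dots$ are the $2m$-gon neighbours and $g_2,g_4,\dots$ the $\ell$-gons), at the vertex of $f$ between $e_i$ and $e_{i+1}$ the incident faces are $f,g_i,g_{i+1}$; hence $\{c(g_i),c(g_{i+1})\}=\{1,2,3\}\setminus\{c(f)\}$ for every $i$, so the sequence $c(g_1),\dots,c(g_{2m})$ takes only the two values of $\{1,2,3\}\setminus\{c(f)\}$ and flips at each step. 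As $2m$ is even this is consistent, and it forces $c(g_1)=c(g_3)=\cdots=:\alpha(f)$, with $\alpha(f)\neq c(f)$.

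To finish I would use this to rule out odd cycles in $H_r$. Along any closed walk $f_0,f_1,\dots,f_k=f_0$ in $H_r$, applying the key step at each $f_i$ gives $c(f_{i+1})=\alpha(f_i)$ and $c(f_i)=\alpha(f_{i+1})$, whence $c(f_{i+2})=c(f_i)$ for all $i$; thus $c(f_i)=c(f_0)$ for even $i$ and $c(f_i)=c(f_1)$ for odd $i$. If $k$ were odd we would get $c(f_0)=c(f_1)$, contradicting that $f_0$ and $f_1$ are adjacent in $H_r$ and so share an edge of $\tau_r$. Hence $H_r$ has no odd cycle and is bipartite, which completes the proof. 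The main obstacle is really the structural preamble --- in particular checking carefully that the boundary of a $2m$-gon alternates $a$- and $b$-edges and that $H_r$-adjacency coincides exactly with ``sharing an $a$-edge'' --- since once these are in hand the coloring argument is short and purely combinatorial.
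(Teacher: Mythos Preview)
Your proof is correct, and the forward direction (bipartite $\Rightarrow$ $3$-colorable) is exactly the paper's argument: $2$-color the $2m$-gons via the bipartition of $H_r$ and give every $\ell$-gon a third color, using that two $\ell$-gons never share an edge. For the converse the paper only writes ``the converse can be proved similarly'' and gives no details; your walk-around-$\partial f$ argument --- showing that all $H_r$-neighbours of a $2m$-gon $f$ carry a single color $\alpha(f)\neq c(f)$, and then ruling out odd closed walks in $H_r$ --- is a correct and complete way to fill that in, and your structural preamble (alternation of $a$- and $b$-edges along $\partial f$, identification of $H_r$-adjacency with sharing an $a$-edge) is exactly what is needed to justify it.
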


Recall that a graph is called bipartite if there exists a partition of the vertex set into two parts such that no edges has both endpoints in one part.

\begin{proof}
Assume that the graph $H_r$ is bipartite. Its vertices can be colored in two different colors, without having the same color on two neighbors. This induces a coloration of the corresponding faces in the graph $\tau_r$. It remains only to color the faces of length $\ell$ defined by the powers of the generator $b$.
By construction, there is only one face of this form around every vertex. Thus, two such faces are never neighbors. To obtain a 3-coloration of the faces, it suffices to color the remaining faces with a third color.
The converse can be proved similarly.
%
\end{proof}

To define a color code, we replace the graph $\tau_r$ by its double cover $\tilde\tau_r$:

\begin{defi} \label{defi:couleur_revetement}
The {\em double cover $\tilde\tau_r$} of the graph $\tau_r=(V, E, F)$ is defined to be the graph whose vertices are the elements of $V \times \F_2$, and whose edges are the pairs $\{(x, i), (y, j)\}$ where $\{x, y\}$ is an edge of $\tau_r$ and $j=i+1$.
Every face $\{x_1, x_2, \dots, x_{2s} \}$ of the graph $\tau_r$ defines two faces of the double cover $\tilde \tau_r$ of the form:
$$\{(x_1,i_0), (x_2, i_0+1), \dots, (x_{2s}, i_0+1) \},$$
with $i_0$ = 0 or 1.
\end{defi}

The next proposition proves that this double cover can be used to define a color code. The row weights of the matrices $\HH_X=\HH_X$ are the length of the faces. We obtain a family of quantum LDPC codes with bounded row weight.

\begin{prop} \label{prop:colorable_cover}
The faces of the graph $\tilde\tau_r$ are 3-colorable.
\end{prop}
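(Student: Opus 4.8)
The plan is to imitate, on the double cover, the proof of Lemma~\ref{lemma:CNS_colorable}. First I introduce the analogue $\tilde H_r$ of $H_r$ for $\tilde\tau_r$: the subgraph of the dual graph $\tilde\tau_r^*$ whose vertices are the faces of $\tilde\tau_r$ of length $2m$, two of them being adjacent when the corresponding faces share an edge. Since $\ell$ is even, every face of $\tilde\tau_r$ has length $\ell$ or $2m$, and by Definition~\ref{defi:couleur_revetement} the faces of $\tilde\tau_r$ incident to a vertex $(x,i)$ are exactly the lifts through $(x,i)$ of the three faces of $\tau_r$ meeting at the (degree-$3$) vertex $x$, of which precisely one has length $\ell$. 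So the length-$\ell$ faces of $\tilde\tau_r$ form an independent set in $\tilde\tau_r^*$, and, arguing exactly as in the proof of Lemma~\ref{lemma:CNS_colorable}, it suffices to prove that $\tilde H_r$ is bipartite: a proper $2$-coloring of $\tilde H_r$ colors the length-$2m$ faces of $\tilde\tau_r$ with two colors, and a third color is given to every length-$\ell$ face.

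Second, I would note that the covering $\pi\colon\tilde\tau_r\to\tau_r$ restricts to a double cover $\tilde H_r\to H_r$. Indeed $\pi$ sends length-$2m$ faces to length-$2m$ faces and edges to edges; by Definition~\ref{defi:couleur_revetement} each length-$2m$ face of $\tau_r$, and each of its boundary edges shared with another length-$2m$ face, has exactly two preimages; and $\tilde H_r$ is $m$-regular, like $H_r$. Hence $\tilde H_r\to H_r$ is a two-sheeted covering of graphs.

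The heart of the proof is then to identify this covering as the \emph{bipartite double cover} of $H_r$, which is a bipartite graph whatever $H_r$ is. Concretely, I want to label the two lifts of each length-$2m$ face $F$ of $\tau_r$ as $\tilde F^0$ and $\tilde F^1$ so that two length-$2m$ faces of $\tilde\tau_r$ sharing an edge always carry opposite labels (such a label is then a proper $2$-coloring of $\tilde H_r$); equivalently, I must check that the monodromy of $\tilde H_r\to H_r$ around every cycle $C$ of $H_r$ equals $|C|\bmod 2$. I would prove the monodromy statement by pushing such a cycle $C$, which alternates length-$2m$ faces of $\tau_r$ with the edges they share, onto the $1$-skeleton of $\tau_r$: inside each visited face $f$, the relevant part of the loop is replaced by a boundary arc of $f$ joining its two shared edges, chosen along a boundary orientation of $f$, so that each of those two shared edges gets traced and immediately retraced and hence contributes nothing. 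Along $\partial f$ the edges shared with another length-$2m$ face alternate with the edges shared with a length-$\ell$ face, and the boundary of $f$ has even length, so each inserted arc carries an odd number of edges; hence the resulting cellular loop $\gamma$ of $\tau_r$ satisfies $|\gamma|\equiv|C|\pmod 2$. Since by Definition~\ref{defi:couleur_revetement} every edge of $\tau_r$ is crossed with a change of layer in $\tilde\tau_r$, the monodromy of $\tilde\tau_r\to\tau_r$ along $\gamma$ is $|\gamma|\bmod 2=|C|\bmod 2$; and $\gamma$ was built to reproduce $C$ at the level of the faces, so this is also the monodromy of $\tilde H_r\to H_r$ around $C$.

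The step I expect to be the main obstacle is precisely this last computation: arranging the boundary-orientation choices coherently along $C$ (which is where one uses that the tiling is orientable, or that the relevant cohomology class vanishes on $H_r$) and tracking the parities of the inserted arcs --- it is here that the hypothesis that $\ell$, hence every face length of $\tau_r$, is even gets genuinely used. Once $\tilde H_r$ is known to be bipartite, the reduction of the first paragraph yields the desired $3$-coloring of the faces of $\tilde\tau_r$.
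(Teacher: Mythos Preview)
Your reduction to showing that $\tilde H_r$ is bipartite is exactly the paper's first step, but after that you take a considerably more indirect route. The paper simply writes down an explicit bipartition using the Cayley structure: every length-$2m$ face of $\tilde\tau_r$ has the form
\[
F(x,i)=\{(x,i),(xa,i+1),(x(ab),i),\dots,(x(ab)^{m-1}a,i+1)\},
\]
and one labels it by the index $i\in\F_2$ carried by the vertices $(x(ab)^j,i)$. The only edges shared by two length-$2m$ faces are the $a$-edges $\{(x,i),(xa,i+1)\}$, and the two faces on such an edge are $F(x,i)$ and $F(xa,i+1)$, which receive opposite labels. That is the entire argument; no monodromy, no orientability, no parity bookkeeping.

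Your approach---show that $\tilde H_r\to H_r$ is the bipartite double cover by computing monodromy via a pushed loop $\gamma$ in the $1$-skeleton of $\tau_r$---is correct in principle, but the step you yourself flag as the main obstacle is genuinely delicate and not carried out. Making the boundary arcs concatenate into a closed walk requires matching the endpoint of the arc in $F_j$ with the start of the arc in $F_{j+1}$; your ``traced and immediately retraced'' remark does not pin this down, and depending on how the boundary orientations are chosen an extra traversal of $e_{j+1}$ may or may not appear, which would spoil the parity count. Your appeal to orientability to make these choices coherent is unnecessary (the statement holds regardless) and signals that the bookkeeping has drifted. In fact the clean way to repair the computation is to choose, for each shared edge $e_j$, the endpoint $u_j$ from which face~(\ref{eqn:face_ab}) yields $F_j$; one then checks that $u_j$ and $u_{j+1}$ lie in opposite parity classes of $\partial F_j$, so each segment has odd length. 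But this choice of $u_j$ is precisely the paper's label $i$, so once your monodromy argument is made rigorous it collapses to the direct bipartition above.
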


\begin{proof}
We adapt the proof of Lemma~\ref{lemma:CNS_colorable}.
We can color in red the isolated faces corresponding to the generator $b$. Then, it remains to prove that the graph $\tilde H_r$, defined like in Lemma~\ref{lemma:CNS_colorable}, is bipartite. To this end, we will transfer the bipartite structure of $\tilde \tau_r$ to the graph $\tilde H_r$.
The vertices of $\tilde H_r$ are the faces of the double cover defined by the power of $ab$:
$$
\{(x,i), (xa,i+1), (x(ab),i), \dots, (x(ab)^{m-1}a,i+1)\}.
$$
This face is denoted by $F(x,i)$.
We will use the index $i \in \F_2$ to partition the vertices of $\tilde H_r$.
Denote by $V_i(\tilde H_r)$ the set of the faces of $\tilde \tau_r$ of the form $F(x, i)$.
These two sets $V_i(\tilde H_r)$, with $i=0$ or $1$, form a partition of the vertices of $\tilde H_r$.
This graph is bipartite because two faces $F(x, i)$ and $F(y, i+1)$ are never neighbors. To prove this, check that if an edge separates two faces of $V(\tilde H_r)$, it is an edge of type $\{(x, 0), (xa, 1)\}$.
\end{proof}

The next proposition proves that the cover $\tilde\tau_r$ locally looks like the original graph $\tau_r$.

\begin{prop} \label{prop:radius_cover}
Every ball of radius $r$ of the double cover $\tilde \tau_r$ is isomorphic to a ball of radius $r$ of $\tau_r$.
\end{prop}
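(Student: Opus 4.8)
The plan is to exhibit the covering map explicitly and show it restricts to an isomorphism on balls of radius $r$. Let $p\colon \tilde\tau_r \to \tau_r$ be the map $(x,i)\mapsto x$ on vertices, extended to edges by $\{(x,i),(y,i+1)\}\mapsto\{x,y\}$ and to faces in the obvious way via Definition~\ref{defi:couleur_revetement}. First I would check that $p$ is a graph morphism which is a local isomorphism: around a vertex $(x,i)$, the neighbours are exactly the pairs $(y,i+1)$ with $\{x,y\}\in E$, so $p$ maps the edge-neighbourhood of $(x,i)$ bijectively onto the edge-neighbourhood of $x$; the same holds for incident faces, since each face of $\tau_r$ through $x$ lifts to exactly one face of $\tilde\tau_r$ through $(x,i)$ (the sign $i_0$ being forced by the parity of the position of $x$ in the face). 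In particular $p$ preserves and reflects adjacency locally, hence it does not contract distances: $d_{\tilde\tau_r}\big((x,i),(y,j)\big)\geq d_{\tau_r}(x,y)$, with equality of path lengths whenever a path lifts.

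Next I would fix a basepoint $(x_0,i_0)$ and consider the ball $B=B\big((x_0,i_0),r\big)$ in $\tilde\tau_r$. Since $p$ is a local isomorphism it is distance-nonincreasing, so $p(B)\subseteq B(x_0,r)$ in $\tau_r$. For surjectivity onto $B(x_0,r)$, I would use the path-lifting property of covering maps: any vertex $y$ with $d_{\tau_r}(x_0,y)\le r$ is joined to $x_0$ by a path of length $\le r$, which lifts starting from $(x_0,i_0)$ to a path of the same length in $\tilde\tau_r$, landing on some preimage of $y$ inside $B$. This shows $p(B)=B(x_0,r)$ and also that $p|_B$ is distance-preserving on the image, since a geodesic in $\tau_r$ from $x_0$ lifts to a path of equal length, so $d_{\tilde\tau_r}\big((x_0,i_0),\tilde y\big)\le d_{\tau_r}(x_0,y)$ for the lift $\tilde y$, and combined with the reverse inequality above we get equality.

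The one real point to argue carefully — and the step I expect to be the main obstacle — is \emph{injectivity} of $p$ on $B$, equivalently that the two lifts of a vertex $y\in B(x_0,r)$ are never both within distance $r$ of $(x_0,i_0)$. Here I would invoke Proposition~\ref{prop:tau_r_planaire}: the ball $B(x_0,r)$ in $\tau_r$ is isomorphic to a ball in the planar tiling $\tau(\ell,m)$, hence in particular it is simply connected (a disc). The double cover $\tilde\tau_r$ is the cover associated to the parity homomorphism sending each edge to $1\in\F_2$; restricted to a simply connected subgraph this cover is trivial, so $p^{-1}\big(B(x_0,r)\big)$ splits as two disjoint isomorphic copies of $B(x_0,r)$, and $B$ is contained in exactly one of them. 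Concretely, if $y\in B(x_0,r)$ had lifts $(y,0)$ and $(y,1)$ both at distance $\le r$ from $(x_0,i_0)$, concatenating a geodesic $(x_0,i_0)\!\to\!(y,0)$ with the reverse of a geodesic $(x_0,i_0)\!\to\!(y,1)$ would give a closed walk in $\tau_r$ based at $x_0$ of length $\le 2r$ whose total $\F_2$-weight of traversed edges is odd (because the two endpoints differ in the second coordinate); such a walk stays in $B(x_0,r)$, contradicting the fact that every closed walk in a planar (hence simply connected) ball has even edge-parity, as it is a sum of face boundaries and each face of $\tau_r$ has even length. This establishes that $p|_B\colon B \to B(x_0,r)$ is a distance-preserving bijection, i.e. an isomorphism of balls, and by symmetry of the construction every ball of radius $r$ in $\tilde\tau_r$ arises this way. $\hfill\square$
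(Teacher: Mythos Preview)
Your proof is correct and follows essentially the same route as the paper: define the projection $(x,i)\mapsto x$, and for injectivity on a ball of radius $r$ argue that two lifts of the same vertex would yield an odd closed walk whose image lies in $B(x_0,r)\subset\tau_r$, which by Proposition~\ref{prop:tau_r_planaire} sits inside the planar tiling $\tau(\ell,m)$ where every cycle is a sum of even-length faces, a contradiction. Your write-up is more explicit about surjectivity and the local-isomorphism/path-lifting structure, but the key step is identical to the paper's.
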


\begin{proof}
Given a vertex $(x, k) \in \tilde V$, we consider the following application:
\begin{align*}
\pi : \tilde V & \longrightarrow V\\
(u, i) & \longmapsto u
\end{align*}
We will prove that the restriction of $\pi$ to a ball of radius $r$ around $x$ is a graph isomorphism. Assume that two different vertices of this ball $(u, i)$ and $(v, j)$ are sent onto the same image $u=v$. We have necessarily $i \neq j$.

There exists a path from $(x, k)$ to $(u, i)$ in the ball of the cover. There exists also a path between $(x, k)$ and $(v, j)$. The concatenation of these two paths is a path from $(u, i)$ to $(v, j)$ passing through $(x, k)$, in the graph $\tilde \tau_r$. The length of this path is odd because the parity of this length is also the parity of $i+j$.
The image of this path under $\pi$ is a cycle included in the ball $B(x, r)$ of the graph $\tau_r$. We can suppose that its length is unchanged. Indeed, if $\pi$ induces an identification of two vertices on these paths before reaching $(u, i)$ and $(v, j)$, we can replace $(x, i)$ and $(y, j)$ by vertices closer to $(x, k)$. Finally, we get a cycle of odd length, included in a ball of radius $r$ of $\tau_r$.
By Proposition~\ref{prop:tau_r_planaire} and the remark below, such a cycle is a sum of faces. As these faces have even length, this cycle should have an even length. This contradiction proves the injectivity of $\pi$, restricted to a ball of radius $r$.
This property does not depend of the vertex $(x, k)$.
\end{proof}

We are now able to estimate the parameters of hyperbolic color codes defined from the covers $\tilde \tau_r$.

\begin{theo} \label{theo:couleur_hyperbolique}
The rate of the hyperbolic color codes associated with $(\tilde \tau_r)_{r\in\N}$ converges to:
$$
2\left( \frac{1}{2}-\frac{1}{\ell}-\frac{1}{m} \right)
$$
and the minimum distance of these codes is at least logarithmic in the length.
\end{theo}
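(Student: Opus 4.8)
The plan is to prove the two assertions separately. The rate will follow from a computation of the Euler characteristic of $\tilde\tau_r$, while the logarithmic lower bound on the distance will come from combining the two "local planarity" statements (Propositions~\ref{prop:tau_r_planaire} and~\ref{prop:radius_cover}) with Lemma~\ref{lemma:color_d_bound} and the exponential bound~(\ref{eqn:group_size}) on $|T_r(\ell,m)|$.

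For the rate, I would first count the cells of $\tau_r=\tau_r(\ell,m)$. Put $N=|T_r(\ell,m)|$. The generating set $\{a,b,b^{-1}\}$ has three distinct elements (since $a$ has order $2$ and $b$ has order $\ell>2$) and is symmetric, so $\tau_r$ is $3$-regular: $|V(\tau_r)|=N$ and $|E(\tau_r)|=3N/2$. Each vertex lies on exactly one face of length $\ell$ (type~(\ref{eqn:face_b})) and on exactly two faces of length $2m$ (types~(\ref{eqn:face_ab}) and~(\ref{eqn:face_ba})), so there are $N/\ell$ faces of the first kind and $N/m$ of the second, which is consistent with $\sum_f|f|=2|E(\tau_r)|$. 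Hence
$$
\chi(\tau_r)=N-\frac{3N}{2}+\frac{N}{\ell}+\frac{N}{m}=-N\left(\frac{1}{2}-\frac{1}{\ell}-\frac{1}{m}\right).
$$
Since passing to the double cover multiplies the number of vertices, edges and faces by $2$, we get $\chi(\tilde\tau_r)=2\chi(\tau_r)$. By Proposition~\ref{prop:colorable_cover} the tiling $\tilde\tau_r$ is trivalent with $3$-colourable faces, so its color code has length $n=|V(\tilde\tau_r)|=2N$ and dimension $k=4-2\chi(\tilde\tau_r)=4+4N\left(\frac{1}{2}-\frac{1}{\ell}-\frac{1}{m}\right)$. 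Therefore $\frac{k}{n}=\frac{2}{N}+2\left(\frac{1}{2}-\frac{1}{\ell}-\frac{1}{m}\right)$, which converges to $2\left(\frac{1}{2}-\frac{1}{\ell}-\frac{1}{m}\right)$ because $N=|T_r(\ell,m)|\to\infty$ as $r\to\infty$.

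For the distance, let $C=\Ker\HH$ be the classical code of the color code of $\tilde\tau_r$ and let $x\in C\setminus C^\perp$ have minimum weight $d$. Let $\delta$ be the largest diameter of a face boundary of $\tilde\tau_r$; it is a constant depending only on $\ell,m$ (one has $\delta\le\max(\lceil\ell/2\rceil,m)$), and any two vertices on a common face are at distance at most $\delta$. I would first argue that $x$ is \emph{face-connected}: if its support split as $A\sqcup B$ with no face meeting both parts, then the characteristic vectors of $A$ and of $B$ would both lie in $C$, could not both lie in $C^\perp$ (else $x\in C^\perp$), and the one outside $C^\perp$ would be a shorter codeword, contradicting minimality. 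Hence any two vertices of $x$ are joined by a chain of vertices of $x$ in which consecutive terms share a face, so they are at distance at most $(d-1)\delta$ in $\tilde\tau_r$. Now suppose $\mathrm{supp}(x)$ were contained in a ball $B(v,r-\delta)$ (meaningful once $r>\delta$). Then every face incident to a vertex of $x$, hence the whole subgraph $F(x)$, lies in $B(v,r)$; by Proposition~\ref{prop:radius_cover} this ball is isomorphic to a radius-$r$ ball of $\tau_r$, which by Proposition~\ref{prop:tau_r_planaire} is isomorphic to a radius-$r$ ball of the planar tiling $\tau(\ell,m)$, so $F(x)$ is planar and Lemma~\ref{lemma:color_d_bound} forces $x\in C^\perp$, a contradiction. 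Thus $\mathrm{supp}(x)$ is contained in no ball of radius $r-\delta$, so for $v\in\mathrm{supp}(x)$ there is $w\in\mathrm{supp}(x)$ with $\mathrm{dist}(v,w)\ge r-\delta+1$, whence $(d-1)\delta\ge r-\delta+1$ and $d\ge(r+1)/\delta$. Finally, by~(\ref{eqn:group_size}), $n=2N\le 2C^r$, so $r\ge(\log n-\log 2)/\log C$ and therefore $d\ge c\log n$ for some constant $c>0$ and all large $n$.

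The cell counts and the closing inequalities are routine. The step that needs real care is the passage from ``$x\notin C^\perp$'' to ``$\mathrm{supp}(x)$ cannot be confined to a small ball'': this is the face-connectedness reduction together with the correct use of Lemma~\ref{lemma:color_d_bound} and the two locality propositions, and it is the exact analogue, for color codes, of the fact that a homologically non-trivial cycle cannot lie in a planar ball, which governs the distance of hyperbolic surface codes. One should also record that $\tilde\tau_r$ is a genuine (connected) tiling, or else apply the dimension formula $k=4-2\chi$ to each component separately.
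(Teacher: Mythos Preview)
Your proof is correct and follows essentially the same route as the paper: an Euler-characteristic count for the rate, and for the distance the combination of Lemma~\ref{lemma:color_d_bound} with Propositions~\ref{prop:tau_r_planaire} and~\ref{prop:radius_cover} to force any $x\in C\setminus C^\perp$ to have support not confined to a radius-$r$ ball, then the bound~(\ref{eqn:group_size}). The only cosmetic difference is in how you pass from small weight to ``lies in a small ball'': the paper bounds the vertex count of $F(x)$ by $3Mw(x)$ with $M=\max\{\ell,2m\}$ and uses that a connected subgraph on fewer than $r$ vertices sits in a radius-$r$ ball, whereas you bound the diameter of $\mathrm{supp}(x)$ via a face-chain of length $d-1$; both yield $d\ge c\,r$ for a constant $c=c(\ell,m)$.
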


When they are well defined, the color codes associated with $(\tau_r)_r$ have the same rate and the same minimum distance for a length divided by 2.

\begin{proof}
The length of the quantum code is $n=|\tilde V| = 2|V|$.
The number of encoded qubits is $k=4-2\chi(\tilde \tau_r)$.
By construction, the Euler characteristic of the double cover $\tilde\tau_r$ is twice the Euler characteristic of $\tau_r$. Thus, we need to compute $\chi(\tau_r)$. An application of the pigeonhole principle gives: $|E|=(3/2)|V|$ and $|F|=(1/\ell+1/m)|V|$ and as a consequence:
$$
\chi(\tau_r)=\left( \frac{1}{l} + \frac{1}{m} - \frac{1}{2} \right) |V|.
$$
The dimension formula follows:
$$
k(\tilde\tau_r) = 2\left( \frac{1}{2} - \frac{1}{\ell} - \frac{1}{m} \right)(2|V|) + 4.
$$

The minimum distance of the color code associated with $\tilde\tau_r$ is the minimum weight of an element  of $C \backslash C^\perp$.
Let $x$ be a codeword of $C$ and let $F(x)$ be the induced graph defined in Lemma~\ref{lemma:color_d_bound}.
Denote by $M=\max\{\ell, 2m\}$ the maximum length of a face of $\tilde\tau_r$.
The number of vertices of the graph $F(x)$ is at most $3Mw(x)$ because every vertex of $x$ had at most 3 faces to the induced graph.
Without loss of generality, we can assume that $F(x)$ is connected, otherwise, we work on the connected components.
If we have $w(x)<r/(3M)$, then the induced graph $F(x)$ is a connected graph of size $|F(x)|<r$. This graph is necessarily included in a ball of radius $r$ of $\tilde\tau_r$.
From Proposition~\ref{prop:radius_cover}, this ball is isomorphic to a ball of radius $r$ of $\tau_r$, and this ball is planar by Proposition~\ref{prop:tau_r_planaire}.
Therefore, the graph $F(x)$ is planar and Lemma~\ref{lemma:color_d_bound} proves that $x$ is in $C^\perp$.
We proved that if $w(x) < r/(3M)$ then $x$ is in $C^\perp$.
Thus, the minimum distance is at least proportional to $r$.
This quantity is at least logarithmic in the length $n=2|V|=2|T_r|$, by Equation~(\ref{eqn:group_size}).
\end{proof}

\section{Tradeoff based on Gromov's systolic inequalities}
\label{section:gromov}

Hyperbolic surface codes of \cite{FML02} and \cite{Ze09} and our hyperbolic color codes produce topological codes beating Bravyi-Poulin-Terhal bound, that is available only for square tilings. In this section, we obtain a general tradeoff between the parameters of surface codes and color codes based on every tiling. Our strategy is to endow the tiling defining the quantum code with a riemannian metric. This stronger structure allows us to import results from riemannian geometry to the field of quantum codes.

We will use the notion of systole and Gromov's systolic inequalities \cite{Gr92}.
We consider a compact connected surface $\mc V$, endowed with a riemannian metric.
We are interested in the first singular homology group $H_1(\mc V, \F_2)$ of $\mc V$ \cite{Ha02}.
The {\em systole} of $\mc V$ is defined to be the shortest length of a cycle embedded in the surface, which is not homological to zero.
The relation between systole and quantum codes has been investigated by Freedman, Meyer and Luo \cite{FML02}. More recently, Fetaya proved that it is impossible to go beyond a minimum distance in $O(\sqrt n)$ with homological codes \cite{Fe12}.
We will use a result, due to Gromov, which bounds the systole of a surface as a function of the genus and the area of the surface \cite{Gr92}:

\begin{theo} \label{theo:gromov} [Gromov-1992]
The systole $\syst H_1(\mc V, \F_2)$ of an orientable compact connected surface $\mc V$ of genus $g \geq 2$, endowed with a riemannian metric satisfies
\begin{align}
\label{eqn:gromov}
\left( \syst H_1(\mc V, \F_2) \right)^2 \leq C \frac{(\log g)^2}{g} \Area(\mc V),
\end{align}
for some constant $C$.
\end{theo}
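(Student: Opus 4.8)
\medskip
\noindent
This is Gromov's systolic inequality, which we use as a black box; for orientation I sketch the shape of a proof and the point at which the difficulty concentrates. First I would normalise the Riemannian metric so that $\Area(\mc V)=1$, which turns the claim into $\syst H_1(\mc V,\F_2)\le C\log g/\sqrt{g}$; write $s$ for the systole, realised by a simple closed geodesic. Everything rests on a separation of scales of exactly the kind already exploited above for the tilings $\tau_r$ (Proposition~\ref{prop:tau_r_planaire}, Lemma~\ref{lemma:color_d_bound}): a loop of length $<s$ is null-homologous over $\F_2$, so for every point $x$ and every radius $\rho<s/2$ the inclusion $B(x,\rho)\hookrightarrow\mc V$ kills $H_1(\,\cdot\,;\F_2)$, i.e.\ below scale $s/2$ the surface is ``homologically planar''.

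Next I would replace $\mc V$ by a finite combinatorial model carrying its $\F_2$-homology. Fix a maximal $\rho$-separated set $p_1,\dots,p_N$, with $\rho$ a small fixed multiple of $s$; then the balls $B(p_i,\rho/2)$ are pairwise disjoint, so $\sum_i\Area(B(p_i,\rho/2))\le 1$, while the balls $B(p_i,\rho)$ cover $\mc V$ by maximality. Let $\Gamma$ be the nerve graph on the vertices $p_1,\dots,p_N$, joining $p_i$ to $p_j$ whenever the covering balls meet, each edge weighted by $d(p_i,p_j)\le 2\rho$. Since the covering balls and their overlaps are homologically trivial, a Mayer--Vietoris / nerve argument yields a surjection $H_1(\Gamma,\F_2)\twoheadrightarrow H_1(\mc V,\F_2)$ under which the weighted length of a shortest essential cycle of $\Gamma$ dominates $s$ up to a universal constant; in particular the first Betti number satisfies $b_1(\Gamma)\ge 2g$.

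The logarithm then enters through a Moore-type short-cycle estimate: a weighted graph whose first Betti number is at least $2g$ and whose vertex degrees and edge weights are controlled in terms of the area budget $\sum_i\Area(B(p_i,\rho/2))\le 1$ must contain a topologically essential cycle of weighted length $\lesssim \log g/\sqrt{g}$. Combined with the previous step this gives $s\lesssim \log g/\sqrt{g}$, equivalently $g\,s^{2}\lesssim(\log g)^{2}$, which after undoing the normalisation is the stated inequality~(\ref{eqn:gromov}).

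The hard part is that there is no curvature hypothesis anywhere. A metric ball in an arbitrary Riemannian surface can be arbitrarily thin --- a long, narrow ``finger'' has area far below $\rho^{2}$ --- so the degree and volume bookkeeping underlying the last step cannot be performed ball by ball; one is forced to run it globally, averaging over the choice of basepoints and replacing pointwise volume bounds by integral ones. Making that averaging argument rigorous (and extracting the sharp power of $\log g$) is the technical core of \cite{Gr92}, and it is precisely this step that I would quote rather than carry out in detail.
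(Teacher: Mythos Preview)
The paper does not prove this theorem at all: it is stated, attributed to Gromov \cite{Gr92}, and then used as a black box in the proof of Theorem~\ref{theo:tradeoff}. Your proposal treats it the same way and, in addition, offers an orientation sketch --- so you are already providing more than the paper does, and there is nothing to compare against on the paper's side.

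As for the sketch itself: the overall architecture (normalise the area, exploit that sub-systolic loops are null-homologous so small balls are homologically trivial, pass to a nerve graph on a maximal $\rho$-net, then run a short-cycle/counting argument on the graph) is a faithful caricature of how such inequalities are obtained, and you correctly isolate the real difficulty. Without any curvature hypothesis there is no lower bound of the form $\Area(B(p_i,\rho/2))\gtrsim\rho^{2}$, so the packing inequality $\sum_i\Area(B(p_i,\rho/2))\le 1$ does not control the number $N$ of net points, and hence the ``Moore-type'' step cannot be executed vertex by vertex. Your last paragraph names exactly this, and it is indeed the place where Gromov's argument does its work via global/averaged estimates rather than pointwise ones; quoting \cite{Gr92} for that step is the right call in this paper.

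One minor remark: the parallel you draw with Proposition~\ref{prop:tau_r_planaire} and Lemma~\ref{lemma:color_d_bound} is only an analogy --- those results concern the specific combinatorial tilings $\tau_r$, not arbitrary Riemannian surfaces --- so I would phrase it as motivation rather than as an ingredient of the argument.
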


Our goal is to apply this result to topological codes.
The discrete notion of systole that appears in the study of surface codes and color codes is the \emph{combinatorial systole} of a tiling of surface $G$, denoted by $\csyst(G)$. It is defined to be the length of the shortest cycle of the graph $G$ that is not a sum of faces.

To apply Gromov's bounds (\ref{eqn:gromov}) to the parameters of a topological code, we will consider the combinatorial tiling of surface $G$ as a topological surface. It can be seen as a smooth surface $\mc V_G$.
We will endow this surface with a riemannian metric that is sufficiently regular on the faces of the tiling.
This is done with the help of a lemma due to Fetaya \cite{Fe12}:

\begin{lemma} \label{lemma:fetaya}
There exists two constants $C_1$ and $C_2$ such that, for every tiling $T$ composed of triangular faces, there exists a riemannian metric on the smooth surface $\mc V_T$ with:
\begin{itemize}
\item the area of every face $f$ of $T$ satisfies $\Area(f) \leq C_1$,
\item $\csyst(T) \leq C_2 \syst H_1(\mc V_T, \F_2)$.
\end{itemize}
\end{lemma}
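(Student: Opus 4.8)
The plan is to turn the combinatorial tiling $T$ into a piecewise-flat Riemannian surface assembled from copies of one model triangle, read off the area bound immediately, and then obtain the systolic comparison by pushing a shortest homologically nontrivial loop onto the $1$-skeleton with only a bounded-factor loss in length.

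First I would fix the model triangle $\Delta_0$ to be the Euclidean equilateral triangle of unit side and put on every face of $T$ an isometric copy of $\Delta_0$. Since each edge then receives length $1$ from both of the two faces meeting along it, these flat metrics glue to a length metric on the surface underlying $T$; mollifying it inside an arbitrarily thin neighbourhood of the $1$-skeleton yields a genuine Riemannian metric on $\mc V_T$ distorting every area and every curve length by a factor at most $1+\e$, for any prescribed $\e>0$. The first item is then immediate with $C_1:=(1+\e)\Area(\Delta_0)=(1+\e)\sqrt3/4$, a universal constant. For the second item, note that the tiling is a CW structure on $\mc V_T$, so $H_1(\mc V_T,\F_2)$ is the homology of $\F_2^{|F|}\xrightarrow{J}\F_2^{|E|}\xrightarrow{I}\F_2^{|V|}$ with $J$ the face matrix and $I$ the incidence matrix; hence $H_1(\mc V_T,\F_2)=\{\text{cycles of }T\}/\{\text{sums of faces}\}$, $\csyst(T)$ is the least combinatorial length of a homologically nontrivial edge-cycle, and $\syst H_1(\mc V_T,\F_2)$ is the least metric length of a homologically nontrivial loop.

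Next I would take a loop $\gamma$ realizing $L:=\syst H_1(\mc V_T,\F_2)$ in some nonzero $\F_2$-class; being length-minimizing, $\gamma$ is a geodesic, i.e.\ a straight segment inside each triangle it crosses (it may also run along edges or through vertices, which only helps). Crossing the triangles one at a time, I would homotope $\gamma$ to a closed edge-walk $\gamma'$ of $T$: inside a given triangle the only available $0$-cells are its three corners, and any arc of $\gamma$ with endpoints among them is homotopic rel endpoints, within that contractible triangle, to a path along its edges. Because each triangle is a disk this is a free homotopy, so $[\gamma']=[\gamma]\neq 0$; therefore the $\F_2$-support of $\gamma'$ is a cycle of $T$ that is not a sum of faces, and $\csyst(T)$ is at most its combinatorial length, hence $O(1)$ times the number of times $\gamma$ crosses the $1$-skeleton.

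The one point where genuine geometry is needed, and the main obstacle, is to bound this crossing number by $O(L)$ \emph{uniformly in $T$}, even though vertex degrees in $T$ are unbounded. I would split the arcs of $\gamma$ between consecutive crossings of the $1$-skeleton into \emph{bulk} arcs of length at least a fixed $\delta_0$, of which there are at most $L/\delta_0$, and short \emph{corner-clips}; for $\delta_0$ small each corner-clip lies in one triangle and hugs one of its vertices. The key estimate is that a geodesic makes only a bounded number of consecutive corner-clips at a single vertex $v$, \emph{independently of $\deg(v)$}: developing the star of $v$ onto the plane turns the relevant arc of $\gamma$ into a straight segment, which subtends an angle strictly below $\pi$ at the image of $v$ while each incident triangle occupies angle $\pi/3$ there, so it meets fewer than three of them. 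Since consecutive clip-bursts are separated by a bulk arc, the corner-clips also number $O(L/\delta_0)$, so $\gamma$ crosses the $1$-skeleton $O(L)$ times. Hence $\csyst(T)\le C_2\,\syst H_1(\mc V_T,\F_2)$ for a universal $C_2$, which is the claim; this is essentially Fetaya's comparison in \cite{Fe12}.
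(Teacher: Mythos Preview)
The paper does not prove Lemma~\ref{lemma:fetaya}; it is stated as a result of Fetaya \cite{Fe12} and used as a black box in the proof of Theorem~\ref{theo:tradeoff}. There is therefore no argument in the paper to compare your proposal against.

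Taken on its own, your reconstruction is sound and is essentially what one expects Fetaya's argument to be: equip each face with a flat unit equilateral metric, read off the area bound, and push a shortest homologically nontrivial loop onto the $1$-skeleton with bounded multiplicative loss. You correctly isolate the only nontrivial step, the uniform crossing bound, and the mechanism you give is the right one: because every corner of every triangle has angle exactly $\pi/3$, a developed straight segment can sweep at most an angle $<\pi$ at any vertex and hence clip at most a bounded number of consecutive triangles there, regardless of the vertex degree.

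Three small points deserve tightening. First, the arcs of $\gamma$ inside a triangle have endpoints on edges, not at corners; before replacing each arc by an edge-path you should first slide each crossing point along its edge to the nearest vertex (a homotopy of bounded cost per crossing), after which each arc runs between corners and may be replaced by at most one edge. Second, your counting assumes at least one bulk arc exists; this holds because consecutive corner-clips are necessarily at the same vertex, so an all-corner-clip $\gamma$ would be a small loop around a single vertex and hence null-homologous. Third, since the mollified metric is not flat near the $1$-skeleton, the smooth-metric minimizer need not be straight inside triangles; it is cleaner to take $\gamma$ to be a geodesic for the cone metric of length at most $(1+\e)\,\syst H_1(\mc V_T,\F_2)$ and run the crossing estimate for that curve.
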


The first point means that all the faces have approximately the same area. From the second point, the combinatorial systole of the tiling and the systole of the surface are comparable in size.
This lemma combined with the graphical description of surface codes and color codes enables us to apply Gromov's inequalities to surface codes and color codes.

\begin{theo} \label{theo:tradeoff}
Let $G$ be a finite tiling with faces of length at most $m$ and vertices of degree at most $m$. The parameters $[[n, k, d]]$ of the surface code and the color code (when it exists) associated with $G$ satisfy:
$$
k d^2 \leq C (\log k)^2 n,
$$
for a constant $C$ which depends only on $m$.
\end{theo}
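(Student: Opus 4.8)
The plan is to reduce the statement to Gromov's inequality (Theorem~\ref{theo:gromov}) via Fetaya's lemma (Lemma~\ref{lemma:fetaya}), handling the surface code and the color code uniformly by expressing both in terms of the combinatorial systole of an associated tiling. First I would normalize the tiling: given $G$ with faces and vertices of size at most $m$, I pass to a triangulation $T$ of the same surface $\mc V_G$ by subdividing each face (e.g. barycentric subdivision). This multiplies $n$, $|F|$, $|V|$ only by a constant depending on $m$, leaves the genus $g$ (hence $\chi$, hence $k = 2-\chi$ or $k=4-2\chi$) unchanged, and changes the combinatorial systole by at most a constant factor depending on $m$ — a cycle of $G$ of length $L$ corresponds to a cycle of $T$ of length $O_m(L)$ and conversely a short non-bounding cycle of $T$ projects to a non-bounding cycle of $G$ of comparable length. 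So it suffices to prove the bound for triangulations, up to adjusting $C$.

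Next I would record the two geometric identifications. For the surface code on $G$, the minimum distance $d$ is by definition the combinatorial systole: the smallest length of a cycle of $G$ (or of $G^*$) that is not a sum of faces; taking the smaller of the two over $G$ and $G^*$, and using that $G^*$ has the same surface and comparable-size faces, we get $d = \Theta_m(\csyst(T))$. For the color code, Lemma~\ref{lemma:color_d_bound} gives the needed direction: if $x \in C\setminus C^\perp$ then $F(x)$ is \emph{not} planar, hence $F(x)$ must contain a non-bounding cycle of the tiling; since $|F(x)| \le 3M\,w(x)$ with $M = O(m)$, a non-bounding cycle lives inside $F(x)$, so $\csyst(T) \le 3M\, w(x)$, giving $\csyst(T) = O_m(d)$, i.e. $d \ge c_m \csyst(T)$. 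In both cases $d$ and $\csyst(T)$ agree up to a constant depending only on $m$.

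Now apply Fetaya's lemma to $T$: there is a riemannian metric on $\mc V_T$ with every face of area $\le C_1$, so $\Area(\mc V_T) \le C_1 |F(T)| = O_m(n)$ (since for a triangulation $|F|$, $|E|$, $|V|$ are all $\Theta(n)$), and with $\csyst(T) \le C_2\, \syst H_1(\mc V_T,\F_2)$. Plugging into Gromov's inequality (\ref{eqn:gromov}) — valid once $g \ge 2$, and the small-genus cases $g \le 1$ are trivial since then $k$ is bounded — we get
$$
\csyst(T)^2 \;\le\; C_2^2\, \syst H_1(\mc V_T,\F_2)^2 \;\le\; C_2^2\, C\, \frac{(\log g)^2}{g}\, \Area(\mc V_T) \;\le\; C'\, \frac{(\log g)^2}{g}\, n.
$$
Finally I substitute $d = \Theta_m(\csyst(T))$ and, from $k = 2 - \chi = \Theta(g)$ (resp.\ $k = 4-2\chi = \Theta(g)$), both $g = \Theta(k)$ and $\log g = \Theta(\log k)$; in the orientable case this is exact, and in the non-orientable case one passes to the orientable double cover, which doubles $n$ and $g$ and only improves constants. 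This rearranges to $k d^2 \le C''(\log k)^2 n$ with $C''$ depending only on $m$.

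The main obstacle is that Gromov's theorem and Fetaya's lemma are both stated for \emph{orientable} surfaces with \emph{triangular} faces, whereas the theorem is claimed for arbitrary finite tilings; so the real work is in the two reductions above — subdividing to a triangulation while controlling how $\csyst$, $k$, and $n$ transform, and passing to the orientable double cover in the non-orientable case — together with the care needed to see that the minimum distance of the color code (the $C\setminus C^\perp$ condition) is genuinely governed by $\csyst(T)$ via Lemma~\ref{lemma:color_d_bound} rather than merely bounded below by it. The geometry-to-combinatorics dictionary is standard once these normalizations are in place.
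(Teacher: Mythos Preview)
Your reduction for surface codes is essentially the paper's: triangulate, relate $\csyst(G)$ to $\csyst(T)$ up to an $m$-dependent factor, apply Fetaya and then Gromov, handle non-orientable surfaces by the orientable double cover, and treat small genus separately. That part is fine.

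The color code argument, however, has the inequality pointing the wrong way. From Lemma~\ref{lemma:color_d_bound} you correctly deduce that any $x \in C \setminus C^\perp$ forces $F(x)$ to be non-planar and hence to contain a non-bounding cycle, yielding $\csyst(T) \le 3M\,w(x)$, i.e.\ $d \ge c_m\,\csyst(T)$. But the tradeoff requires an \emph{upper} bound $d \le C_m\,\csyst$ of some tiling, so that Gromov's bound on $\csyst$ transfers to $d$. Lemma~\ref{lemma:color_d_bound} is a one-way implication (planar $\Rightarrow$ $x\in C^\perp$); it cannot produce a small-weight witness in $C\setminus C^\perp$, which is what an upper bound on $d$ demands. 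You flag this at the end as ``care needed,'' but it is not a matter of care: no amount of refining the Lemma~\ref{lemma:color_d_bound} argument will reverse the direction.

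The paper closes this gap by a different route: it invokes the \emph{shrunk lattices} of the color code (three tilings of the same surface, one per color) and uses the known fact that $d$ is at most twice the combinatorial systole of a shrunk lattice. Since a shrunk lattice has faces of length at most $m/2$, one then bounds its systole exactly as in the surface-code case. To repair your proof you need either this shrunk-lattice upper bound or some other explicit construction of a non-trivial color-code word of weight $O_m(\csyst)$.
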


\begin{proof}
As recalled in Section~\ref{section:parameters}, the minimum distance of the surface code associated with a tiling $G$ is $\min\{\csyst(G), \csyst(G^*)\}$. Thus, our first task is to bound the combinatorial systole of the graph $G$. Denote by $T$ the triangulation deduced from $G$ by adding a vertex in the middle of every face and by joining it to all the vertices of this face. This triangulation will allow us to apply Fetaya's lemma.
Let $\gamma$ be a cycle of the triangulation $T$. This cycle can be transformed into a cycle of the graph $G$. It suffices to replace a path that goes through the center of a face by a path which follows the boundary of this face. By this transformation, some pairs of edges are replaced by at most $m/2$ edges. Moreover, this transformation conserves the homology class of the cycle. Thus, we have:
$$
\csyst(G) \leq \frac m 4 \csyst(T).
$$
We can now apply Lemma~\ref{lemma:fetaya} and then Theorem~\ref{theo:gromov} to the surface $\mathcal \mc V_T$ defined by the tiling $T$:
\begin{align*}
(\csyst(G) )^2
& \leq ( \frac m 4 C_2 \syst H_1(\mathcal V_T, \F_2) )^2\\
& \leq C' m^2 \frac{(\log g)^2}{g} \Area(\mathcal V_T).\\
\end{align*}
for some constant $C'$.
We can assume that the surface $\mc V_T$ is orientable, otherwise we consider an orientable double cover of $\mc V_T$. If the surface has genus 0 or 1, we apply Bravyi-Poulin-Terhal bound.

The genus of the surface is proportional to $k$. From Lemma~\ref{lemma:fetaya}, the area of the surface, which is the sum of the area of the triangles, is upper bounded by $\Area(\mathcal V_T) \leq C_1 |F(T)|$, where $|F(T)|$ is the number of triangles in the triangulation~$T$. It is exactly twice the number of edges $|E|=n$ of the graph $G$.
Therefore, we obtain the inequality:
\begin{align*}
(\csyst(G) )^2
& \leq C m^2 \frac{(\log k)^2}{k} n.\\
\end{align*}
This result is also available for the systole of the dual graph $G^*$ because we assumed that the degree of vertices of $G$ are bounded by $m$. Finally, we can replace the left side of the inequality by the minimum distance $d$ of the surface code associated to $G$.

The case of color codes is similar. The minimum distance $d$ of a color code is upper bounded by twice the systole of the shrunk lattices \cite{BM06, De12}.
Thus, we need to bound the combinatorial systole of these shrunk lattices. When we start with a tiling of surface with faces of length at most $m$, the faces of the shrunk lattices have length at most $m/2$. Finally, we can proceed as in the case of surface codes.
\end{proof}

\section*{Conclusion}

We constructed a family of color codes based on hyperbolic tilings. This strategy provides a new family of quantum LPDC codes with asymptotically constant rate and growing minimum distance. To the best of our knowledge, it is the first family of color codes with such parameters.

These hyperbolic color codes are clearly different from hyperbolic surface codes. They offer the advantage of being automatically convertible into subsystem color codes \cite{Bo10}.
This leads to a family of subsystem codes with constant rate and growing minimum distance, with weight two gauge operators.
These subsystem codes have the feature of reducing noise during the syndrome measurement. This transformation in subsystem codes is all the more important because such a construction is not known for hyperbolic surface codes.


Using Gromov's systolic inequalities, we proved that the best asymptotic minimum distance of surface codes and color codes with non-vanishing rate is logarithmic in the length. 
Two families of quantum LDPC codes exceed our bound.
Hypergraph product codes of Tillich and Zémor \cite{TZ09} have fixed non-zero rate for a minimum distance that grows as a square root of the length.
Freedman, Meyer and Luo proposed a family of homological codes encoding only one qubit but with a minimum distance in $O( (n \log n)^{1/2})$ \cite{FML02}.

\section*{Acknowledgements}

The author wishes to ackowledge Gilles Zémor for helpful suggestions on this article and Alain Couvreur for carefully reading this paper and providing constructive comments.
This work was supported by the French ANR Defis program under contract ANR-08-EMER-003 (COCQ project).
Part of this work was done while the author was supported by the Délégation Générale pour l'Armement (DGA) and the Centre National de la Recherche Scientifique (CNRS).
He is now supported by the LIX-Qualcomm Postdoctoral fellowship.

\end{document}